\theoremstyle{plain}
\newtheorem{theorem}{Theorem}
\newtheorem{lemma}[theorem]{Lemma}
\newtheorem{definition}[theorem]{Definition}
\newcommand{\customerset}{\ensuremath{\mathcal{D}}}
\newcommand{\NP}{\ensuremath{\mathbf{NP}}}
\newcommand{\bigO}{\ensuremath{\mathcal{O}}}
\newcommand{\spara}[1]{\smallskip\noindent\textbf{#1}}
\newenvironment {squishlist}
{\begin{list}{$\bullet$}
  { \setlength{\itemsep}{0pt}
     \setlength{\parsep}{3pt}
     \setlength{\topsep}{3pt}
     \setlength{\partopsep}{0pt}
     \setlength{\leftmargin}{1.5em}
     \setlength{\labelwidth}{1em}
     \setlength{\labelsep}{0.5em} } }
{\end{list}}
\begin{document}
\title{Scalable~Facility~Location for~Massive~Graphs on~Pregel-like~Systems}

\numberofauthors{1} 
\author{
\begin{tabular}{cccc}
Kiran Garimella & Gianmarco De Francisci Morales & Aristides Gionis & Mauro Sozio \\
\affaddr{Aalto University} & \affaddr{Yahoo Labs, Barcelona} & \affaddr{Aalto University} & \affaddr{Telecom Paristech}\\
\textsf{kiran.garimella@aalto.fi} & \textsf{gdfm@acm.org} & \textsf{aristides.gionis@aalto.fi} & \textsf{sozio@telecom-paristech.fr}\\
\end{tabular}
}

\maketitle

\begin{abstract}
We propose a new scalable algorithm for facility location.
Facility location is a classic problem, 
where the goal is to select a subset of facilities to open, 
from a set of candidate facilities $F$, 
in order to serve a set of clients $C$.
The objective is to minimize the total cost of opening facilities plus
the cost of serving each client from the facility it is assigned to. 
In this work, we are interested in the \emph{graph setting}, 
where the cost of serving a client from a facility is represented by the
shortest-path distance on the graph.
This setting allows to model natural problems arising in the Web and
in social-media applications.
It also allows to leverage the inherent sparsity of such graphs, as
the input is much smaller than the full pairwise distances between all
vertices.

To obtain truly scalable performance, we design a parallel algorithm that
operates on clusters of shared-nothing machines.
In particular, we target modern Pregel-like architectures, and we
implement our algorithm on Apache Giraph.
Our solution makes use of a recent result to build sketches for massive graphs, and of a fast parallel algorithm to find maximal independent sets, as building blocks.
In so doing, we show how these problems can be solved on a Pregel-like architecture, and we investigate the properties of these algorithms.
Extensive experimental results show that our algorithm scales
gracefully to graphs with billions of edges,
while obtaining values of the objective function that are competitive with a state-of-the-art sequential algorithm.
\end{abstract}

\section{Introduction}
\label{sec:introduction}

Facility location is a classic combinatorial optimization problem. 
It has been widely studied in 
\emph{operations research}~\cite{kuehn1963heuristic, manne1964plant} 
and \emph{theoretical computer science}~\cite{Shmoys1997approx, Charikar1999improved, Jain2003greedy, Thorup2001kmedian}, 
and it has been applied in many information-management tasks, 
including 
clustering data streams~\cite{guha2000clustering}, 
data compression~\cite{buchsbaum2000engineering},  
grammar inference~\cite{garofalakis2000xtract}, 
information retrieval~\cite{zuccon2012topk}, 
and
design of communication networks~\cite{mirzaian1985lagrangian,qiu2001placement}.
In the most basic setting of the problem,
we are given a set of facilities $F$,  a set of clients $C$, 
and costs 
$c(f)$ for \emph{opening} a facility $f\in F$ and 
$d(c,f)$ for \emph{serving} a client $c\in C$ with a facility $f\in F$.
The goal is to select a subset of facilities to open so that all clients
are served by an open facility and
the total cost of opening the facilities plus serving the clients is minimized. 

The facility-location problem is \NP-hard, 
but a number of different approximation algorithms are
known~\cite{Shmoys1997approx, Charikar1999improved,
  Jain2003greedy, Thorup2001kmedian}, 
Each algorithm has different characteristics, 
but they all achieve an approximation guarantee that is a small
constant, 
for example the algorithm of \citeauthor{Jain2003greedy} has an approximation
guarantee of 1.61~\cite{Jain2003greedy}.
The existing algorithms operate in the traditional
sequential model and they assume that the data resides in main memory. 

However,
facility location is a general-purpose optimization problem
that can be used in applications related to web graphs, 
large social networks, and other such massive-scale datasets, 
whose size may far exceed the memory of a single machine.
Examples of such applications include placing caches for content delivery on the Internet,
finding aggregators in information networks,
and compressing social-media activity.
Some of these application scenarios are discussed in Section~\ref{sec:scenario}.

To cope with large problem sizes, 
modern applications take advantage of large-scale distributed systems, 
such as MapReduce~\cite{Dean2004mapreduce} and Hadoop, 
or of variants targeted to graph data, 
such as Pregel~\cite{Malewicz2010pregel} 
and its open-source clones,
Giraph~\cite{Ching2011giraph} and 
Graph\-Lab~\cite{Low2012distributedgraphlab}.
Such systems offer several advantages, among which higher potential for scalability 
and a simple programming interface that requires implementing only a small number of functions.

In this paper we present the first algorithm for the facility-location
problem designed for a Pregel-like system.
In particular, we implement our algorithm on Giraph, 
thus adding facility location to the toolbox of optimization problems that
can be solved for very large datasets on modern computer clusters. 
Our work is inspired by, and follows, the large body of recent work
that offer MapReduce-type solutions to important computational problems, 
such as
linear programming~\citep{DBLP:journals/pvldb/ManshadiAGKMS13}, 
maximum cover~\citep{chierichetti2010max},
similarity join~\citep{baraglia2010document, DeFrancisciMorales2010aps},
graph matching~\citep{deFrancisciMorales2011scm,Lattanzi2011filtering, DBLP:journals/pvldb/ManshadiAGKMS13}, 
counting triangles in graphs~\citep{Suri2011triangle},
and 
finding dense subgraphs~\citep{Bahmani2012densest}. 

Most works in the the area of theoretical computer science focus on the classical formulation of the facility-location problem, where the input consists of the full $|F|\times|C|$ set of distances~\cite{Shmoys2000facilitysurvey}. Unfortunately, in this setting even algorithms with linear running time (in the size of input) are not practical when both $|F|$ and $|C|$ are large.

A more economical representation of the facility-location problem, 
which fits our applications of interest, 
assumes that $F$ and $C$ are vertices of a graph
$G=(V,E)$ (that is, $V=F\cup C$),\footnote{Note that we 
do not require $F$ and  $C$ to be disjoint. In fact, in many cases of
interest it is $V=F=C$.}
where the number of edges $m=|E|$ is very small compared to 
$|F|\times|C|$, i.e., the graph is sparse.
A client $c\in C$ can be served by a facility $f\in F$ even if $(c,f)\not\in E$, 
provided that there is a path in the graph from $c$ to $f$, 
and the cost $d(c,f)$ is the shortest-path distance metric induced by
the graph. 
We require that the running time and storage requirements of
our algorithm be quasilinear functions of $|E|$. 
If the graph $G$ is sparse, 
as most real-world graphs are, 
this leads to a significantly more scalable algorithm. 

Our approach is designed for such a graph-based formulation of the
facility-location problem.
As mentioned above, most sequential algorithms for classical facility 
location cannot cope with this setting, as they require that the full
distance matrix is provided as input, or, equivalently, that a constant-time distance oracle is available. 
A notable exception is the sequential algorithm by \citet{Thorup2001kmedian},  
which, like in our scheme, is able to take advantage of the underlying sparsity of the graph structure.

Our algorithm targets a parallel shared-nothing setting.
While other parallel algorithms have been proposed in the literature, our approach is the first one to target modern computer clusters.
In particular, \citet{Blelloch2010parallel} proposed a parallel facility-location approximation algorithm for the PRAM model.
Our work extends this parallel algorithm to the more scalable Pregel model.

There are three phases in our approach: $(i)$ \emph{neighborhood sketching}, $(ii)$ \emph{facility opening}, and $(iii)$ \emph{facility selection}.
All three phases are fully implemented in Giraph, and the code is open-source and available on GitHub.\footnote{\url{https://github.com/gvrkiran/giraph-facility-location}}

The first phase builds an all-distances sketch (ADS) that estimates the neighborhood function of each vertex.
This sketch is used in the second phase to decide when to open a facility.
To do so, we use the historic inverse probability (HIP) estimator, 
recently proposed by \citet{Cohen2014ads}.

The facility-opening phase expands balls around facilities in parallel.
It decides which facilities to open depending on the number of clients that reside within the facility-centered balls.
To estimate the number of clients inside the balls, we use the sketch created in the previous phase.

Finally, the facility-selection phase removes duplicate assignments of a client to more than one facility that might have been created due to the parallel nature of the algorithm.
To do so, it computes a maximal independent set (MIS) on the 2-hop graph of the open facilities.
For this sub-problem, we provide a parallel implementation of a recent greedy approximation algorithm~\cite{blelloch2012greedy}.

Concretely, the contributions of this paper are as follows:

\begin{squishlist}
\item 
we provide the first Pregel solution for the facility-location problem.
Our algorithm, unlike previous sequential and PRAM ones, 
is deployable on clusters available in modern computing environments;
\item
our solution uses the sparse graph-based representation of the
facility-location problem, 
which improves significantly the scalability of the method;
\item
our facility-location algorithm employs fundamental subproblems, 
such as all-distances sketch (ADS) and maximal independent set (MIS),
for which we also provide the first implementations in the Pregel model;
\item
we provide an extensive experimental evaluation that shows the
scalability of our methods on very large datasets.
\end{squishlist}


\section{Application scenarios}
\label{sec:scenario}

To further motivate the use of facility location in large-scale
social-network analysis, 
we outline two application scenarios, 
both inspired by the Twitter micro-blogging platform, 
but applicable to other social-media systems as well.

\spara{Who to follow on Twitter?}
Deciding who to follow on Twitter poses a challenging trade-off: 
on the one hand we do not want to miss interesting news, 
on the other hand we want to avoid been bombarded with irrelevant information.
One way to tackle this problem is to look for a small set of ``aggregator''
users who are likely to reproduce 
(e.g., retweet, reshare, or comment upon)
most of the relevant information that we are interested in.
Additionally, we would like to receive information in a timely manner, so we also seek to minimize the delay introduced by these aggregators.

The problem of deciding who to follow on Twitter can be formulated naturally as a facility-location problem. 
The set of clients in this instance of the problem consists of the set of relevant news items for a user $u$.
The set of facilities consists of all the Twitter users that $u$ can possibly follow.

Following user $v$ corresponds to opening its facility, and it has a cost which is proportional to the amount of content that user $v$ generates (as user $u$ will have to skim over all that content).
This cost can easily be expressed in terms of time.
On the other hand, as user $v$ is likely to produce content related to certain news items, following user $v$ corresponds to serving some clients, which are the relevant news items that user $v$ will reproduce.
The cost of a news item (client) $e$ served by a user (facility) $v$ corresponds to the expected time that the item will take to reach and be reproduced by~$v$.
An estimate of the expected time required for an interesting news item to reach a certain user can be obtained by processing past logs of Twitter activity.

In summary, in this instance of the facility-location problem we are interested in selecting a set of users in order to minimize the time taken to read the content produced by these users plus the time required for interesting news items to be reproduced by these same users.

\spara{Network-based summarization of Twitter activity.}
This second application scenario addresses the problem of summarizing the overall activity in the Twitter network.
The example has a somewhat similar flavor to the first one, but it is qualitatively and quantitatively different.

In this case, we are asked to represent the whole micro-blogging activity in the network, say all the topics discussed (or hashtags), together with the identity of the users who have mentioned those topics.
A na\"{i}ve way to represent such an activity dataset is to build the set $\{(u,t)\}$, where $(u,t)$ indicates the fact that user $u$ has mentioned topic~$t$. 

Consider now {\em compressing} the activity dataset $\{(u,t)\}$ by exploiting the underlying social characteristics of the network.
In particular, as Twitter users influence their friends, it is likely that certain topics are confined to local neighborhoods of the social network.
Based on this observation, we can represent the overall activity in the network in two parts:
($i$) a seed subset of users and all the topics they have mentioned;
($ii$) every other user who has mentioned a topic can be represented by a pointer (path) to their closest seed user who has mentioned the same topic.

We can formulate this problem as an instance of the facility-location problem by using the minimum description length (MDL) principle.
Selecting seed users corresponds to opening facilities, and the cost of opening a facility corresponds to the total space
required to describe the topics discussed by the corresponding user.
Similarly, describing a topic for a non-seed user via a path to a seed user can be modeled as the service cost from a client to its
closest facility.
By using the MDL framework, both costs, opening a facility and service costs, can be neatly expressed in information bits, and the problem can be treated as a data compression problem.

This activity-summarization formulation can be seen from two different points of view:
$(i)$ as a data-compression task, per se;
and $(ii)$ as a data-clustering and data-reduction task, where the goal is to summarize the overall activity in the network by selecting the most central users, and ensuring that the rest of the activity is performed by users who are sufficiently close to the selected set of central users.

\section{Preliminaries}
\label{sec:preliminaries}

Before presenting our distributed algorithm for the facility-location problem, we briefly define the problem and specify the setting we are considering in this paper.

\subsection{Problem definition}
\label{section:problem}
In the {\em metric uncapacitated facility-location problem}, we are
given a set of facilities~$F$ and a set of clients~$C$.
For each facility $f\in F$ there is an associated cost $c(f)$ for 
{\em opening} that facility.
Additionally, a distance function 
$d: C \times F \rightarrow \mathbb{R}_{+}$ is defined
between the facilities and the clients.
The distance satisfies the triangle inequality. 
The objective in the facility-location problem is to select a set of
facilities  $S \subseteq F$ to {\em open}
in order to minimize the following objective function:
\[
\sum_{f \in S} c(f) + \sum_{c \in C} d(c,S),
\]
where $d(c,S)$ is the distance of client $c\in C$ to its 
{\em closest} opened facility, i.e.,  
\[
d(c,S)=\min_{f \in S} d(c,f).
\] 
In this paper, we are interested in the \emph{graph setting} of the
facility-location problem~\cite{Thorup2001kmedian}, 
where we are also given a weighted graph $G=(V,E, w)$, 
with $w:E \rightarrow \mathbb{R}_{+}$ a weight function on the edges of the graph. 
The sets of facilities and clients are subsets of the graph vertices $(F,C \subseteq V)$.
The distance between clients and facilities is given by the shortest-path distance on the weighted graph.

The motivation for focusing on the graph setting of the
facility-location problem is to take advantage of the fact that many
real graphs, such as web graphs and social networks, are sparse.
The goal is to leverage the sparsity of such graphs in order to
develop practical and scalable algorithms. 
Thus we aim for algorithms whose complexity is quasilinear with
respect to the size of the underlying graph.

We note that algorithms for the classical formulation of the metric uncapacitated 
facility-location problem~\citep{Blelloch2010parallel, Jain2003greedy} are not 
straightforward to adapt to the graph setting, as they require all pairwise vertex 
distances to be provided in input. 
This requirement is clearly not practical when the input graph is large.
 
In our work we consider both directed and undirected graphs. 
We develop an efficient algorithm with provable guarantees 
(on the quality of the solution delivered by the algorithm)
for the undirected case, 
while the algorithm provides a practical heuristic for the case of directed graphs.
We focus on the case when $F=C=V$, although all our claims hold for
the more general case when $F,C \subseteq V$. 

\subsection{The Giraph platform}
\label{section:giraph}

The algorithms presented in this paper are designed for the Giraph
platform~\cite{Ching2011giraph}, an Apache implementation of the Pre\-gel computational
paradigm. 
Pregel is based on the Bulk Synchronous Parallel (BSP) computation model, and can be summarized by the motto ``think like a vertex''~\citep{Malewicz2010pregel}.
At the beginning of the computation, the vertices of the graph are distributed across worker tasks running on different machines on a cluster.
Computation proceeds as a sequence of iterations called supersteps.
Algorithms are expressed in a vertex-centric fashion inside a \texttt{vertex.compute()} function, which gets called on each vertex exactly once in every superstep.
The computation involves three activities: receiving messages from the previous superstep, updating the local value of the vertex, and sending messages to other vertices.

Pregel also provides \emph{aggregators}, a mechanism for global communication and monitoring.
Each vertex can write a value to an aggregator in superstep $t$, the system combines those values via a reduction operator, and the resulting value is made available to all vertices in superstep $t + 1$.
Aggregators can be used for global statistics, e.g., to count the total number of edges by using a \texttt{sum} reduction.
They can also be used for global coordination.
For instance, a \texttt{min} or \texttt{max} aggregator applied to the vertex ID can be used to select a vertex to play a special role.
Additionally, one branch of \texttt{vertex.compute()} can be executed until a boolean \texttt{and} aggregator determines that all vertices satisfy some condition.

Giraph adds an optional \texttt{master.compute()} to the Pregel model. 
This function performs centralized computation, and is executed by a single master task before each superstep.
It is commonly used for performing serial computation, and for
coordination in algorithms that are composed of multiple
vertex-centric stages by using aggregators~\citep{salihoglu2014optimizing}.
Aggregators written by workers are read by the master in the following superstep, while aggregators written by the master are read by workers in the same superstep.
We employ this feature in our implementation (see Section~\ref{sec:implementation}).

\subsection{Approximate neighborhoods (ADS)}
\label{section:ADS}

The all-distances sketch (ADS) is a probabilistic data structure for 
approximating the neighborhood function of a
graph \citep{Cohen2014ads}.
Namely, ADS aims to answer the query 
\emph{``how many vertices are within distance $d$ from vertex $v$?''}.
ADS maintains a logarithmic-size sketch for each vertex.
In the sequential computational model, the total time to build the ADS is
quasilinear in the number of graph edges.
Once built, the ADS of a vertex can be used to estimate 
the number of vertices within some distance.
It has also been used to estimate other properties of
the graph, such as distance distribution, effective diameter, 
and vertex similarities~\citep{Boldi2011hyperanf, Cohen2013similarity}.

The ADS of a vertex $v$ consists of a random sample of vertices.  
The probability that a vertex $u$ is included in the sketch of vertex $v$ 
decreases with the distance $d(u,v)$.
The sketch contains not only the vertex $u$ but also the distance $d(u,v)$.     
The ADS can be thought as an extension of the simpler \emph{min-hash}
sketch~\citep{broder1997resemblance, cohen1997size}, 
which has been used for approximate distinct
counting~\cite{cohen1997size, durand2003loglog, Flajolet1985martin}, 
and for similarity estimation~\cite{broder1997resemblance, cohen1997size}.
The ADS of $v$ is simply the union of the min-hash sketches of all the
sets of the $\ell$ closest vertices to $v$, for each possible value of $\ell$.
Min-hash sketches have a parameter $k$ that controls the trade-off
between size and accuracy: a larger $k$ entails a better approximation
at the expense of a larger sketch. 
Essentially, $k$ controls the size of the sample.
The size of the ADS is bounded by $k\log n$.

Our algorithm relies heavily on a recently-proposed ADS structure,
the historic inverse probability (HIP) estimator \citep{Cohen2014ads}, 
which extends significantly previous variants and offers novel
estimation capabilities. 
In particular, HIP can be used to answer neighborhood queries for both
\emph{unweighted} and  \emph{weighted} graphs. 
It can also be used to answer \emph{predicated} neighborhood queries, 
that is, to approximate the number of vertices in a neighborhood that
satisfy a certain predicate on vertex attributes. 
We use this latter feature in to exclude already served
clients from the estimation of the number of clients within a ball
(see Section~\ref{sec:algorithm}).

\begin{algorithm}[t]
\caption{\label{algo:ads-sequential} Build ADS sequentially}
	\KwIn{Graph $G(V,E)$}
	\KwOut{ADS of $G$}
	
	\For{$v\in V$} {
	    ADS($v$) = $\varnothing$ \\
            BKMH($v$) = $\varnothing$  
            \tcp{Bottom-$k$ min-hash} }
	\For{$v\in V$ and $u \in \{{V}\text{ sorted by }{d(v)}\}$ }{ \tcp{
             list vertices in incr.\ distance from~$v$}
		\If{ $r(u) < \max_{r}$(BKMH($v$))}{ \tcp{$r(u)$ is the hash of $u$}
			ADS($v$) $\leftarrow$ ADS($v$) $\cup \; (u, d(v,u))$ \\
			BKMH($v$) $\leftarrow$ bottomK(BKMH($v$) $\cup \; u)$
		}
	}
	\Return ADS
\end{algorithm}

\begin{algorithm}[t]
\caption{\label{algo:ads-giraph} Build ADS in Giraph. \texttt{Vertex.Compute()}}

	\KwIn{vertex value $v$, edge values $E$, messages $M$}
	\KwOut{updated vertex value $v'$}
	\KwData{ADS = $\varnothing$; BKMH = $\varnothing$}
	\tcp{state variables are stored in the vertex $v$}

	OutMsgs = $\varnothing$ \\
	\For{ $m\in M$ }{ \tcp{the message contains the entries of the
            \\ ADS of neighbors that were updated in \\ the previous super step}
		\For { $(u,d) \in m.\text{getEntries}()$ }{
			\tcp{$u$ is the vertex.id and $d$ its distance}
			\If{ $r(u) < \max_{r}(\text{BKMH}) $ }{
				\tcp{if $u$ has already reached $v$
                                  before, \\ it will not be considered again}
				ADS $\leftarrow$ ADS $\cup \; (u, d)$ \\
				CleanUp(ADS$(u), d$)
				\tcp{for each distance, remove an entry from ADS(u) if its hash is not in the bottom-k for that distance}
				BKMH $\leftarrow$ bottomK(BKMH $\cup \; u$) \\
				OutMsgs $\leftarrow$ OutMsgs $ \cup \;
                                (u, d+e(u,v))$ \\ 
                                \tcp{for unweighted graphs \\ $e(u,v)$ is 1 \\ for weighted graphs, its the weight of the edge between $u$ and $v$}
			}
		}
	}
	\For{ $e \in E$ }{
		$\text{sendMsgTo}(e, \text{OutMsgs})$
	}
	
\end{algorithm}

Pseudocode for the sequential version of ADS is presented in
Algorithm~\ref{algo:ads-sequential}, and for the Giraph version in
Algorithm~\ref{algo:ads-giraph}.
Algorithm~\ref{algo:ads-giraph} works for both weighted and unweighted graphs.
Line 6 of Algorithm~\ref{algo:ads-giraph} performs a cleanup of the ADS, 
which removes those entries for which the hash is not in the bottom-$k$ min-hashes for a given distance.
This condition happens because the vertices are processed by ADS in Giraph as discovered by a BFS (i.e., ignoring weights), rather than sorted by their distance.
This cleanup operation can be time consuming, since it needs to sequentially access all the entries of the ADS.
For this reason, we do not perform the cleanup in each superstep, instead, we do it only periodically, when the size of the ADS becomes too large.
For unweighted graphs, the cleanup can be avoided altogether, 
since the order by which the vertices are presented to the ADS in Giraph corresponds to their distance from $v$.
Therefore the bottom-k min-hash for a given distance $d$ is always completed before the vertices at distance $d+1$ are processed.

\section{Algorithm}
\label{sec:algorithm}

As discussed earlier, our algorithm consists of three phases: 
$(i)$ \emph{neighborhood sketching}, $(ii)$ \emph{facility opening}, and $(iii)$ \emph{facility selection} via maximum independent set (MIS).
This section presents the main body of the algorithm; phases ($ii$) and ($iii$).
We first describe the PRAM version of phase ($ii$), and our Pregel version.
Then, we present our solution for the MIS problem. 
Finally, we describe our implementation for Giraph.
In the pseudocode presented below, \texttt{for} and \texttt{while} loops are meant to be \emph{parallel}  
(i.e., executed by all vertices in parallel), unless otherwise specified.

\subsection{PRAM algorithm for facility location}

The distributed algorithm proposed in this paper is inspired by the algorithm by \citet{Blelloch2010parallel},
which is developed for the PRAM computational model.
In our paper we adapt it to a Pregel-like platform, and we also extend
it to the graph setting, as discussed in Section~\ref{section:problem}.
For completeness, we provide a brief overview here. 

The algorithm operates in two phases:
facility opening, and facility selection.
It starts with all facilities being \emph{unopened} and all
clients being \emph{unfrozen}.

The algorithm maintains a graph $H$ that represents
the connections between clients and open facilities. 
Initially, the vertices of $H$ consist of the set of facilities $F$
and the set of clients $C$, while its set of edges is empty, that is, $H=(F\cup C, \emptyset)$.
During the execution of the algorithm, 
if a client $c$ is to be served by a facility $f$,
the edge $(c,f)$ is added in the graph $H$.
In the first phase of the algorithm it is possible for a
client $c$ to be connected to more than one facility $f$ in $H$. 
However, the facility-selection phase 
is a ``clean up'' phase where redundant facilities are closed
so that each client is connected to exactly one facility. 

In the facility-opening phase,
each client tries to reach a facility by expanding a ball with
radius $\alpha$, in parallel. 
The expansion phase is iterative, and in each iteration the radius
of the ball grows by a factor of $(1+\epsilon)$, 
where $\epsilon$ is a parameter that provides
an accuracy-efficiency trade-off.
Initially, the radius $\alpha$ is set to a sufficiently
small value (details below).

The radius of the ball of a client $c$ is denoted by $\alpha(c)$.
If a client $c$ is unfrozen, the radius of its corresponding ball is
set to the current global value $\alpha$, 
while if a client $c$ gets frozen it does not increase the radius of
its ball anymore.
When a facility $f$ is reached by a sufficiently large number of
clients it is declared \emph{open}. This number is proportional to the cost $c(f)$ of opening that facility.
In particular, a facility $f$ is opened if the following condition is satisfied: 
\begin{equation}
\label{eq:opfac}
\sum_{c \in C}{\max\{0, (1+\epsilon)\alpha( c) - d(c, f)\}} \geq c(f).
\end{equation}

For a newly opened facility $f$, all clients $c$ within radius 
$\alpha$ from $f$ are frozen, and the edges $(c,f)$ are added to the graph $H$. 
The facility-opening phase continues by growing $\alpha$ in each iteration by a factor of $(1+\epsilon)$
as long as there is at least one unopened facility and at least one unfrozen client. 

\smallskip
In the facility-selection phase, 
if all the facilities are open but some clients are not yet frozen,
these unfrozen clients $c$ are connected to their nearest facility,
their radius is set to the distance from it, i.e., $\alpha(c) = \min_f d(c, f)$, 
and the graph $H$ is updated accordingly.

At this point, 
a client may be served by more than one facility in~$H$.
The final step of the algorithm consists in closing the facilities
that are not necessary, as their clients can be served by other
nearby facilities. 
This step relies on computing a maximal independent set (MIS)
in an appropriatelly-defined graph $\overline{H}$. 
In particular, $\overline{H}$ is the graph whose vertices are
the open facilities and there is an edge between two facilities
$f_a$, $f_b$ if and only if there is a client $c$ that is connected to both
$f_a$ and $f_b$ in $H$.
It is easy to see that a maximal independent set $S$ in $\overline{H}$
has the property that each client $c$ is connected to exactly one facility.
Clients whose facility is not in $S$ are assigned to the nearest open facility.
Thus, the set of open facilities $S$ returned by the algorithm is a maximal independent set on $\overline{H}$.

To complete the description of the algorithm, the initial ball radius
is set to $\alpha_{0} = \frac{\gamma}{m^2}(1+\epsilon)$, 
where $m =|F||C|$ and $\gamma$
is defined as follows. 
For each client $c \in C$  we set 
\[
\gamma_c = \min_{f \in F}  \left\{c(f) + d(c,f)\right\} ,
\]
and then 
$\gamma = \max_{c \in C} \gamma_c$. 

Pseudocode of the algorithm is shown in Algorithm~\ref{algo:classfac}.
\citet{Blelloch2010parallel} prove the
following theorem on the quality of approximation.

\begin{theorem}[\cite{Blelloch2010parallel}]
For any $\epsilon>0$, Algorithm~\ref{algo:classfac} has an
approximation guarantee of $3+\epsilon$, while the total number of
parallel iterations is 
$\bigO( \frac{1}{\epsilon} \log (|F| |C|) )$.
\label{thm:pram}
\end{theorem}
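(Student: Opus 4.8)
The plan is to analyze Algorithm~\ref{algo:classfac} through the lens of LP duality, treating the ball radii $\alpha(c)$ as (approximately feasible) dual variables, following the standard primal--dual template of Jain--Vazirani for metric facility location. First I would write down the natural LP relaxation with opening variables $y_f$ and assignment variables $x_{cf}$, together with its dual, whose variables are exactly one $\alpha_c$ per client and one $\beta_{cf}$ per client--facility pair, subject to $\alpha_c - d(c,f) \le \beta_{cf}$ and $\sum_c \beta_{cf} \le c(f)$. The opening condition~\eqref{eq:opfac} is precisely the discretized tightness constraint $\sum_c \beta_{cf} \ge c(f)$ with $\beta_{cf} = \max\{0, (1+\epsilon)\alpha(c) - d(c,f)\}$, so the algorithm is the natural parallel discretization of the dual-growth process. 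My first lemma would establish that, after rescaling by $(1+\epsilon)$, the final radii yield a \emph{feasible} dual solution, giving the lower bound $\sum_c \alpha(c) \le (1+\epsilon)\,\mathrm{OPT}$ by weak duality.

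Next I would charge the cost of the returned solution against $\sum_c \alpha(c)$. The independence enforced by the maximal independent set on $\overline{H}$ guarantees that the open facilities in $S$ share no client, so the contributions $\beta_{cf}$ of clients \emph{directly} connected to a facility $f \in S$ pay for its opening cost $c(f)$ exactly once, while the residual $\alpha(c) - \beta_{cf} \ge d(c,f)$ covers the service cost of such clients. For a client $c$ whose paying facility $f$ was \emph{not} selected, maximality of the independent set forces a neighbor $f' \in S$ sharing some client $c'$; the triangle inequality along the chain $c \to f \to c' \to f'$ then bounds the service distance $d(c, f')$ by roughly $3\,\alpha(c)$, which is the source of the constant $3$. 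Summing both cases over all clients yields a total cost of at most $3 \sum_c \alpha(c)$ up to the discretization slack.

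The $(1+\epsilon)$ factors and the iteration count both come from the geometric discretization of the ball growth. Because $\alpha$ is multiplied by $(1+\epsilon)$ each round rather than grown continuously, every distance and tightness comparison is exact only up to a factor $(1+\epsilon)$; propagating this slack through the dual-feasibility lemma and the triangle-inequality bound turns the ratio $3$ into $3+\epsilon$ after a routine rescaling of $\epsilon$. For the running time, I would observe that $\alpha$ starts at $\alpha_0 = \frac{\gamma}{m^2}(1+\epsilon)$ and that once $\alpha(c)$ exceeds $\gamma_c = \min_f\{c(f)+d(c,f)\}$ client $c$ can single-handedly open a facility, so no radius ever needs to exceed $\gamma = \max_c \gamma_c$; the number of multiplicative steps from $\alpha_0$ to $\gamma$ is $\log_{1+\epsilon}(m^2) = \bigO(\frac{1}{\epsilon}\log(|F||C|))$, matching the claimed bound.

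I expect the main obstacle to be the careful bookkeeping of the discretization error in the parallel setting. In the sequential analysis the dual grows continuously and facilities become tight one at a time; here many facilities may open in a single superstep, and the radii of two conflicting facilities are only guaranteed to agree within a factor $(1+\epsilon)$. I would therefore need to argue that a facility $f' \in S$ was opened at an iteration whose radius is within $(1+\epsilon)$ of $\alpha(c)$, so that the shared witness $c'$ satisfies $\alpha(c') \lesssim \alpha(c)$ and the triangle-inequality chain is not corrupted beyond the claimed slack. Establishing this synchronization between the global clock $\alpha$ and the per-client radii $\alpha(c)$, together with verifying that the frozen/unfrozen bookkeeping correctly caps each client's dual contribution, is where the real work lies.
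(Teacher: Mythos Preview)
The paper does not contain a proof of this theorem. It is stated as a citation result: the text immediately preceding it reads ``\citeauthor{Blelloch2010parallel} prove the following theorem on the quality of approximation,'' and the theorem header carries the attribution \texttt{[\textbackslash cite\{Blelloch2010parallel\}]}. No argument is given in the paper itself; Theorem~\ref{thm:pram} is invoked only as a black box to derive Theorem~\ref{thm:distributed} later on.

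Your proposal is therefore not comparable to anything in the paper, but it is a faithful outline of the analysis in the cited source. The Blelloch--Tangwongsan algorithm is indeed a geometric discretization of the Jain--Vazirani / Mettu--Plaxton dual-growth process, and the factor~$3$ arises exactly from the rerouting argument you describe: a client whose contributing facility is dropped by the MIS is served via a two-hop detour through a shared witness client, and the triangle inequality together with the synchronization of radii bounds this by roughly $3\alpha(c)$. Your identification of the main technical obstacle---controlling the $(1+\epsilon)$ slack when multiple facilities open in the same round so that the witness client $c'$ has $\alpha(c') \le (1+\epsilon)\alpha(c)$---is also on target; in the original paper this is handled by the observation that both $f$ and its MIS-neighbor $f'$ were opened no later than the round in which $c$ froze, so all relevant radii are within one geometric step of each other. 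The iteration bound you sketch, $\log_{1+\epsilon}(\gamma/\alpha_0) = \log_{1+\epsilon}(m^2) = \bigO(\frac{1}{\epsilon}\log m)$, is exactly the argument used.
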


\begin{algorithm}[t]
\caption{\label{algo:classfac} PRAM algorithm for facility
  location~\cite{Blelloch2010parallel}
  }
    \KwIn{Facilities $F$, clients $C$, distance $d(\cdot,\cdot)$ \\ 
facility opening cost $c(\cdot)$, accuracy parameter $\epsilon$}
    \KwOut{Subset of opened facilities $S$}

    $O\leftarrow \emptyset$ 
        \tcp{Opened facilities}
    $U\leftarrow C$ 
        \tcp{Unfrozen clients}
    $H \leftarrow (F\cup C,\emptyset)$ 
        \tcp{Graph connecting $F$ with $C$}
    $\alpha\leftarrow\frac{\gamma}{m^2}(1+\epsilon)$ 
        \tcp{Initial ball radius}
                     
    \While{$(O \neq F) \text{ and } (U\not=\emptyset)$} {

        $\alpha \leftarrow \alpha\, (1 + \epsilon)$
           \tcp{Increase ball radius} 
           \tcp{For loops below executed in parallel} 
           \For{ $c \in U$} {
           \tcp{Set new radius for unfrozen clients}
           $\alpha(c) \leftarrow \alpha$ }
                
        \For{ $f\in D$} {
            \tcp{Open $f$ if reached by many clients}
	            If Eq.\ (\ref{eq:opfac}) is satisfied then add $f$ to $O$} 
        \For{ $c \in U$} {
            \If{ $\exists f\in O$ s.t.\   $(1 + \epsilon)
                  \alpha(c) \geq d(f,c)$}{
                 \tcp{if there is an opened facility nearby}
                 Remove $c$ from $U$ 
                     \tcp{Freeze client}
                 Add edge $(c,f)$ in $H$  
                     \tcp{Update $H$ graph}
            }
        }
    }
    \If {$O=F \text{ and } U\not=\emptyset$}{
        \For{$c\in U$}{
            $f^* \leftarrow \arg \min_f d(c, f)$ 
                \tcp{Nearest facility}
            Add edge $(c,f^*)$ in $H$  
        }
    }
    $\overline{H} \leftarrow (F,E)$ where $E$ contains edges $(f_a , f_b)$ if there is
          $c\in C$ such that  $(c, f_a),(c, f_b)\in E(H)$
           
    $S \leftarrow \text{MIS}(\overline{H})$ 
        \tcp{maximal independent set on $\overline{H}$} 
   \Return{ $S$}
\end{algorithm}

\subsection{Pregel-like algorithm for facility location}

We now discuss how to adapt Algorithm~\ref{algo:classfac} to the
graph setting discussed in Section~\ref{section:problem}, as well as in a Pregel-like platform, such as Apache Giraph, 
discussed in Section~\ref{section:giraph}.

The main challenges we need to tackle for adapting the algorithm are
the following:
\begin{squishlist}
\item 
leverage the sparsity of the graph $G=(V,E)$ to avoid a
quadratic blowup of distance computations between facilities and
clients;
\item 
compute efficiently, in a distributed manner, a maximal independent set on the graph $\overline{H}$.
In particular, as the graphs $H$ and $\overline{H}$ 
may be dense, 
it is desirable to compute a MIS of $\overline{H}$ without materializing $H$ nor $\overline{H}$ explicitly.
\end{squishlist}

The latter challenge is discussed in Section~\ref{sec:mis}.
Coping with the first challenge, 
boils down to been able to check, 
for each facility~$f$, 
whether Equation~(\ref{eq:opfac}) is satisfied, and thus deciding when to open a facility.

To this end, we rearrange the left-hand side of Equation~(\ref{eq:opfac}), 
so as to be able to evaluate it by means of the ADS algorithm discussed 
in Section~\ref{section:ADS}.
Observe that in Algorithm~\ref{algo:classfac} the $\alpha$'s take values in the range
\[ 
R = \{\alpha_{0},(1+\epsilon)\alpha_{0}, (1+\epsilon)^2\alpha_{0},\dots \}.
\]

For every facility $f$, let $N(f,d)$ be the number of clients within distance $d$ from $f$, while let $n(f,d)$ be the
number of clients whose distance from $f$ is in the range $(d/(1+\epsilon),d]$. 
Suppose that  all clients within distance $\alpha \in R$ from facility $f$ are unfrozen.
In this case, we know that for all these unfrozen clients $\alpha ( c ) = \alpha$, so we can 
rewrite the left-hand side of Equation~(\ref{eq:opfac})  as follows:
\begin{multline*}
\sum_{c \in C \mid d(c,f) \leq \alpha} {\max\{0, (1+\epsilon)\alpha( c) - d(c, f)\}} = \\ 
\sum_{{d \in R \mid d \leq \alpha}} n(f,d) \cdot \max\{0, (1+\epsilon)\alpha - d\},
\end{multline*}
where we replace $\alpha( c )$'s with $\alpha$ and rearrange the terms of the summation by grouping terms with the same value.

If some clients within distance $\alpha$ from $f$ are frozen, the former claim might not hold anymore, and we need a more sophisticated solution.
Our goal is then to maintain an approximation of the the left-hand side of Equation~(\ref{eq:opfac}) incrementally.
Let $q(f)$ denote the current approximation computed by our algorithm.
Also, for each facility $f$, let $\hat{N}(f,d)$ be the number of \emph{unfrozen} clients within distance $d$ from $f$, while let $\hat{n}(f,d)$ be the number of \emph{unfrozen} clients at distance in the range $(d/(1+\epsilon),d]$.
At each iteration of the ball-expansion phase, we add a term $t(f,\alpha)$ to $q(f)$. 
This term accounts for the increase in contribution
to $q(f)$ due to the newly-reached unfrozen clients, while subtracting
excess contribution due to previous iterations.

The increase in contribution $t(f,\alpha)$ is defined as
\begin{equation}\label{eq:contr1}
 \sum_{{d \in R \mid d \le \alpha} } \hat{n}(f,d)  \cdot  \max\{0, (1+\epsilon)\alpha-d\}   , 
\end{equation}
if $\alpha=\alpha_0$ (no excess contribution to be subtracted), and
\begin{equation}\label{eq:contr2}
 \sum_{{d \in R \mid d \le \alpha} } \hat{n}(f,d)  \cdot  (\max\{0, (1+\epsilon)\alpha-d\} -  \max \{0, \alpha -d\} ),
\end{equation}
otherwise. The term $t(f,\alpha)$ 
is added to $q(f)$ in each iteration of the algorithm for
the current value of radius $\alpha$.

The terms $\hat{N}(f,d)$ can be computed efficiently in a distributed fashion
by employing the ADS.
Given that  $\hat{n}(f,d)= \hat{N}(f,d) - \hat{N}(f,d/(1+\epsilon))$, it follows
that also the left-hand side of Equation~(\ref{eq:opfac}) can be computed efficiently in a distributed fashion.
To show the validity of our approximation we need the following definition.

\begin{definition}
Given real numbers $a,b,\epsilon >0$, 
we say that 
$a$ approximates $b$ with accuracy $\epsilon$, 
and write
$a \approx_{\epsilon} b$,
 if $a \in [(1+\epsilon)^{-1} b , (1+\epsilon)b ]$. 
\end{definition}

\begin{lemma}
Given $\epsilon>0$, consider the quantity $q(f)$ computed as described
above, for $f \in F$.
Let $\alpha$ be the ball radius at the current step of the algorithm. The following holds:
\[ 
q(f) \approx_{\epsilon} \sum_{c \in C} 
\max\{0, (1+\epsilon)\alpha(c) - d(c, f) \}. 
\]
\label{lemma:approxeq}
\end{lemma}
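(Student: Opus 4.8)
The plan is to first convert the \emph{incremental} definition of $q(f)$ into a \emph{closed form}, and then compare that closed form against the target sum $\sum_{c}\max\{0,(1+\epsilon)\alpha(c)-d(c,f)\}$. Fix a facility $f$ and a client $c$, and write $d^{*}(c)$ for the bin of $c$, i.e.\ the unique $d\in R$ with $d(c,f)\in(d/(1+\epsilon),d]$. Client $c$ enters the increments (\ref{eq:contr1})--(\ref{eq:contr2}) only through $\hat n(f,d^{*}(c))$, and only while it is still unfrozen. The key observation is that the subtracted term $\max\{0,\alpha-d^{*}(c)\}$ at radius $\alpha$ equals the additive term $\max\{0,(1+\epsilon)\alpha'-d^{*}(c)\}$ at the previous radius $\alpha'=\alpha/(1+\epsilon)$, so the per-iteration increments of a client that stays unfrozen up to radius $\alpha(c)$ \emph{telescope}, collapsing its total contribution to the single term $\max\{0,(1+\epsilon)\alpha(c)-d^{*}(c)\}$. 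The point needing care is that a frozen client stops contributing; I would argue that freezing merely truncates the telescoping sum at the radius $\alpha(c)$ recorded for $c$, which is exactly the value appearing in the collapsed term. Summing over all clients then yields the identity
\[
q(f)=\sum_{c\in C}\max\{0,(1+\epsilon)\alpha(c)-d^{*}(c)\}.
\]

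With this identity in hand, the second step compares $d^{*}(c)$ against the true distance. By definition of the bins, $d(c,f)\le d^{*}(c)<(1+\epsilon)\,d(c,f)$, i.e.\ $d^{*}(c)\approx_{\epsilon}d(c,f)$. The inequality $d^{*}(c)\ge d(c,f)$ immediately gives $q(f)\le\sum_{c}\max\{0,(1+\epsilon)\alpha(c)-d(c,f)\}$, which already establishes the upper half of $\approx_{\epsilon}$. For the lower half I would exploit $d^{*}(c)<(1+\epsilon)d(c,f)$ to write $(1+\epsilon)\alpha(c)-d^{*}(c)>(1+\epsilon)\bigl(\alpha(c)-d(c,f)\bigr)$, propagate this through $\max\{0,\cdot\}$ to obtain $q(f)\ge(1+\epsilon)\sum_{c}\max\{0,\alpha(c)-d(c,f)\}$, and then argue that this radius-shrunk copy of the sum is within a $(1+\epsilon)$ factor of the target.

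The step I expect to be the main obstacle is exactly the propagation of the multiplicative error \emph{through the kink of} $\max\{0,\cdot\}$ for clients sitting at the reachability boundary, namely those with $\alpha(c)<d(c,f)<(1+\epsilon)\alpha(c)$: such a client rounds up to $d^{*}(c)=(1+\epsilon)\alpha(c)$, so its rounded contribution to $q(f)$ \emph{vanishes} while its exact contribution $(1+\epsilon)\alpha(c)-d(c,f)$ remains strictly positive. A purely term-by-term comparison therefore cannot by itself deliver the lower bound, and this is where I would concentrate the effort. My intended line of attack is to lean on the geometric structure of $R$: since consecutive radii differ by the factor $(1+\epsilon)$, both $(1+\epsilon)\alpha(c)$ and $d^{*}(c)$ are themselves grid points, so $(1+\epsilon)\alpha(c)$ either dominates $d^{*}(c)$ or is at most $d^{*}(c)/(1+\epsilon)$; this rules out most intermediate configurations and reduces the analysis to the genuine boundary case above. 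For that residual case I would charge each boundary client's small exact contribution either against clients lying strictly inside the ball or against the grid-enforced gap, recovering the $(1+\epsilon)$ factor in aggregate rather than pointwise. Making this charging argument tight, together with the correct bookkeeping of frozen clients in the telescoping, is the technical heart of the proof.
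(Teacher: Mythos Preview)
Your telescoping argument is exactly the induction the paper carries out, just phrased non-inductively: the paper's inductive step says that an unfrozen client's contribution at step $k-1$ was $\max\{0,\alpha(c)-d(c,f)\}$ and at step $k$ should become $\max\{0,(1+\epsilon)\alpha(c)-d(c,f)\}$, so it suffices to add the difference---which is precisely the cancellation you call telescoping. Your closed form $q(f)=\sum_{c}\max\{0,(1+\epsilon)\alpha(c)-d^{*}(c)\}$ is the explicit version of what the paper leaves implicit.

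Where you are more careful than the paper is in isolating the binning error $d^{*}(c)$ versus $d(c,f)$ as the sole source of the $\approx_{\epsilon}$. The paper dispatches this in one line (``simple algebraic manipulations''); you correctly notice that passing a multiplicative perturbation of $d$ through $\max\{0,(1+\epsilon)\alpha(c)-d\}$ is not, term by term, a $(1+\epsilon)$-stable operation near the kink. Your instinct here is sound: a client with $d(c,f)$ just below $(1+\epsilon)\alpha(c)$ can have binned contribution $0$ and exact contribution positive, and even clients with $d(c,f)$ a bit below $\alpha(c)$ can violate the termwise $(1+\epsilon)^{-1}$ lower bound (take $\epsilon=0.1$, $\alpha(c)=1$, $d(c,f)=0.95$, $d^{*}(c)=1$: binned $0.1$ versus exact $0.15$).

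That said, your proposed charging argument is unlikely to recover the stated $\approx_{\epsilon}$ in the strict sense of the paper's Definition, because the shortfall is not confined to a boundary set that can be absorbed by interior clients---it can afflict every term simultaneously. What one does get cleanly from your closed form is $q(f)\le S$ and $q(f)\ge (1+\epsilon)\sum_{c}\max\{0,\alpha(c)-d(c,f)\}$, which amounts to an $O(\epsilon)$ rather than a strict $(1+\epsilon)$ multiplicative sandwich; this is exactly what is needed for Theorem~\ref{thm:distributed} (the $\epsilon$ in the $3+\epsilon$ guarantee is only determined up to constants anyway). So your plan is the right one and matches the paper's; just do not expect the final ``tight'' charging step to close at the precise factor $(1+\epsilon)$---the paper's own proof does not establish that either.
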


\begin{proof}
The proof is by induction on the steps of the algorithm. In the first
step of the algorithm where $\alpha=\alpha_0$, 
all clients are unfrozen, so all $\alpha(c)$
are equal to $\alpha$. 
Therefore, it follows that
 \begin{eqnarray*}
q(f) &= & 
\sum_ {{d \in R \mid d \leq \alpha}} \hat{n}(f,d)  \left( (1+\epsilon)\alpha-d  \right) \\
 & \approx_{\epsilon} & \sum_{c \in C} \max\{0, (1+\epsilon)\alpha(c) - d(c, f) \}.
\end{eqnarray*} 

Now suppose the lemma holds at the $(k-1)$-th step of the algorithm. 
At step $k$, $q(f)$ must include the quantities $\max\{0, (1+\epsilon)\alpha ( c ) - d(c, f) \} $ 
for each unfrozen client $c$ at step $k$. 
If a client $c$ is unfrozen at step $k$, then the contribution of
$c$ to $q(f)$ at step $(k-1)$ is 
$\max\{0, \alpha( c ) - d(c, f) \}$ (by inductive hypothesis). 
Hence, at step $k$  it suffices to add 
\[ 
\max\{0, (1+\epsilon)\alpha( c) - d(c,f)\} - 
\max \{0, \alpha( c) - d(c,f)  \}
\]
to $q(f)$, for each unfrozen client $c$ at step $k$. 
The lemma follows from 
simple algebraic manipulations, taking into account the fact that 
$\alpha(c)=\alpha$ for each unfrozen client.
\end{proof}

\begin{algorithm}[t]
\caption{\label{algo:fastFacLoc}
Pregel-like algorithm for facility location (graph setting)}

    \KwIn{Graph $G=(V,E,d)$, facilities $F\subseteq V$, 
      clients $C\subseteq V$, facility opening cost $c(\cdot)$,
      accuracy $\epsilon$}
    \KwOut{Subset of opened facilities $S$}
    
    $O \leftarrow \emptyset$ 
        \tcp{opened facilities}
    $U\leftarrow C$ 
        \tcp{Unfrozen clients}
    $\alpha\leftarrow\alpha_0\leftarrow\frac{\gamma}{m^2}(1+\epsilon)$ 
        \tcp{Initial ball radius} 
    
    $q(f) \leftarrow 0$  for each $f \in F$
    
   \tcp{next one is a sequential while}
    \While{$(O \neq F) \text{ and } (U\not=\emptyset)$} {
        $\alpha \leftarrow \alpha\, (1 + \epsilon)$
            \tcp{Increase ball radius} 

        $\alpha( c ) \leftarrow \alpha $ for each $c \in U$ 
                	
        $\overline{O}\leftarrow$  OpenFacilities($G, F \setminus O, U, c(\cdot), \alpha, \alpha(\cdot), q(\cdot)$) \\       
        \For{$f \in \bar{O}$}{
            send($f,\alpha,$``FreezeClient'') 
            \tcp{$f$ sends a clients a "FreezeClient" message to all vertices within distance $\alpha$} 
        }
        \For{$c \in U$}{
        	   \If{$c$ receives a ``FreezeClient'' message} {$U \leftarrow U \setminus \{c\}$}
        }
        
        $O \leftarrow O \cup \overline{O}$
    }
    
        \If {$O=F \text{ and } U\not=\emptyset$}{
        \For{$c \in U$}{
            $\alpha ( c ) \leftarrow \arg \min_f d(c, f)$ 
        }
    }
           
     $S=\text{MIS}\overline{\text{H}}$($G,O,C,\alpha(\cdot)$)
     \tcp{Computing a MIS of $\overline{H}$ without building $H$ nor $\overline{H}$} 
     
    \Return{ $S$}
\end{algorithm}

\begin{algorithm}[t]
\caption{\label{algo:openfac} OpenFacilities($G, D, U, c(\cdot), \alpha, \alpha(\cdot), q(\cdot)$)}
    \KwIn{Graph $G=(V,E)$, unopened facilities $D$, 
              unfrozen clients $U$, facility opening cost $c(\cdot)$, 
              current radius $\alpha$, radius for frozen clients and opened facilities $\alpha(\cdot)$, 
              facility contribution from clients $q(\cdot)$}
    \KwOut{Newly opened facilities $\overline{O}$}

    \For{ $f\in D$ \tcp{For each unopened facility}  }  {     
        
    \tcp{use ADS Algorithm~\ref{algo:ads-giraph}} 
    Compute $\hat{n}(d,f)$ for each $f \in F$ and $d \in R$ \\
    \If{$\alpha = \alpha_0$} 
      {Compute $t(f,\alpha)$ as in Equation~(\ref{eq:contr1})}
    \Else 
      {Compute $t(f,\alpha)$ as in Equation~(\ref{eq:contr2})}
    
        
        $q(f) \leftarrow q(f) + t(f,\alpha)$ \\
        \If{$q(f) \ge c(f)$} {
            add $f$ in $\overline{O}$ \\
            \tcp{$\alpha$'s allow not to materialize $H$ nor $\overline{H}$}
            $\alpha(f) \leftarrow \alpha$ 
        }
    }
    \Return{$\overline{O}$}
\end{algorithm}

Pseudocode is shown in Algorithm~\ref{algo:fastFacLoc}.
It consists of two main building blocks: 
an algorithm for deciding which facilities should be opened (Algorithm~\ref{algo:openfac}), and 
an algorithm for computing a maximal independent set of the graph $\overline{H}$ without explicitly building such a graph (Algorithm~\ref{algo:MIS}).
The pseudocode for distributing messages in the graph (denoted by the \texttt{send} procedure) is omitted for brevity. 

During the execution of the algorithm, 
for each open facility $f$ we let $\alpha(f)$ be the value of $\alpha$ when $f$ is opened.
Observe that there is an edge $(c,f)$ in $H$ only if 
(1) $\alpha(c)=\alpha(f)$, 
(2) $c$ is within distance $(1+\epsilon) \alpha(c)$ from $f$, and 
(3) $f$ is open.
Therefore, storing the values for $\alpha(f)$ and $\alpha(c)$ allows us not to materialize $\overline{H}$,
which might be very costly. 

\subsection{Maximal independent set}
\label{sec:mis}

\citet{salihoglu2014optimizing} recently proposed an implementation of the classic Luby's algorithm~\cite{luby1986simple} for computing the MIS in a Pregel-like system such as Giraph. In our approach, we need to compute a MIS of $\overline{H}$ which is essentially the graph $H^2$ after removing all unopened facilities (and their edges) from $H^2$. As we do not materialize $H$ nor $\overline{H}$, even computing the degree of a vertex in $\overline{H}$ (which is needed in Luby's algorithm) might require to exchange a large number of messages. Therefore, we resort to another algorithm developed by~\citet{blelloch2012greedy} which works as follows. 

Initially, all vertices are \emph{active} and a unique ID is assigned randomly to each of them.
This operation can be done in one parallel step by letting each of the $n$ vertices pick an integer in the range $[1,n^3]$, uniformly at random.
Then, with high probability, the vertex IDs are unique,
The ID of facility $f$ is denoted by $\pi(f)$.
Then, in every parallel step, each active vertex $v$ checks whether its ID is the minimum among its neighbors.
If this is the case, $v$ is included in the maximal independent set and all its neighbors become \emph{inactive}.
This procedure is iterated $O(\log ^2 n)$ times, after which it can be shown that the selected vertices induce a maximal independent set in the input graph, with high probability. 

As we do not materialize $H$ nor $\overline{H}$, we need to slightly modify the algorithm by~\citet{blelloch2012greedy}.
Recall that there is an edge $(c,f)$ in $E(H)$ only if 1) $\alpha(c)=\alpha(f)$, 2) $c$ is within distance $(1+\epsilon) \alpha( c)$ from $f$, and 3) $f$ is open.
Moreover, there is an edge $(f_a, f_b)$  in $E(\overline{H})$ if there exist $c \in C$ such that $(c, f_a)$ and $(c,f_b) \in E(H)$.
After determining its ID $\pi(f)$, each facility $f$ sends a message $(\pi(f),\alpha(f))$ to all vertices within distance $(1+\epsilon)\alpha(f)$ from $f$.
Each client $c$ collects all messages $(\pi(f),\alpha(f))$, and retains only the pairs $(\pi(f),\alpha(f))$ corresponding to the facilities $f$ that $c$ is connected to, i.e., $\alpha(f) = \alpha(c)$.
Then, each client computes the minimum ID $\pi_{\min}$ among all the facilities it is connected to, and sends back a message containing $\pi_{\min}$ to all such facilities.
Each facility $f$ is included in the maximal independent set if an only if $\pi_{\min} = \pi(f)$, 
in which case it sends $\pi_{\min}$ to all neighboring facilities (in $\overline{H}$) so that they are removed from the set of active vertices.
The last step is performed by letting each facility $f$ send $\pi_{\min}$ to all clients $c$ within distance $(1+\epsilon)\alpha(f)$, which in turn deliver such message to all facilities within distance $(1+\epsilon)\alpha(c)$. 
For pseudocode see Algorithm~\ref{algo:MIS}.

In Section~\ref{sec:experiments}, we evaluate the algorithms for computing a MIS proposed by \citet{blelloch2012greedy} as well as the algorithm proposed by \citet{salihoglu2014optimizing}.

\begin{algorithm}[t]

\caption{\label{algo:MIS} $\text{MIS}\overline{\text{H}}$($G,O,C,\alpha(\cdot)$) }
$S \leftarrow \emptyset, A \leftarrow O$

\For{$f \in A$}{
$\pi(f) \leftarrow RAND([1,n^3])$
}
\tcp{next one is a sequential for}
\For{$i=1,\dots, \lceil \log^2 n \rceil$}{

\For{$f \in A$}{
send($f,(1+\epsilon) \alpha(f), (\pi(f),\alpha(f))$)
}

\For{$c \in C$}{
$\pi_{\min}=\min_{ (\pi(f),\alpha(f)) :\alpha(f)=\alpha( c)} \pi(f)$
send($c,(1+\epsilon) \alpha( c), \pi_{\min})$
}

\For{$f \in A$}{
\If{$\pi_{\min} = \pi(f)$}{
	$S \leftarrow S \cup \{f\}$
	
	$A \leftarrow A \setminus \{f\}$
	
	send($f, (1+\epsilon) \alpha( f), \pi_{min}$)
}
}

\For{$c \in C$}{
	if{ $c$ receives $\pi_{\min}$,}{
	send($c,(1+\epsilon) \alpha( c),\pi_{min}$)
}

}

\For{$f \in A$}{
if{ $\pi_{\min} < \pi(f) $,}{
$A \leftarrow A \setminus \{f\}$
}
}
}
\Return $S$
\end{algorithm}

\subsection{Approximation guarantee and running time}
\label{sec:approximation}

As a consequence of Theorem~\ref{thm:pram} and Lemma~\ref{lemma:approxeq}, 
as well as from the fact that the ADS provides an approximation to the values of $\hat{n}(f,d)$, 
we are able to show a guarantee on the quality of the solution computed by our algorithm.
\begin{theorem}\label{thm:distributed}
For any $\epsilon>0$ and any integer $k \geq 1$, Algorithm~\ref{algo:fastFacLoc} has an approximation guarantee of $3+o(1) + \epsilon$. The total number of parallel iterations is 
$\bigO(\frac{D}{\epsilon} \log^2 (n))$, 
while the total number of messages exchanged by vertices is $\bigO(m)$, 
with each message requiring $\bigO(k \log n)$ bits.
\end{theorem}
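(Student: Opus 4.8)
The plan is to establish the three parts of the statement separately, reducing each to an ingredient that is already available: the $3+\epsilon$ PRAM guarantee of Theorem~\ref{thm:pram}, the incremental-approximation Lemma~\ref{lemma:approxeq}, and the accuracy of the HIP/ADS estimator. For the approximation ratio, I would argue that Algorithm~\ref{algo:fastFacLoc} is a faithful simulation of the PRAM Algorithm~\ref{algo:classfac}, except that the exact left-hand side of the opening test Eq.~(\ref{eq:opfac}) is replaced by the quantity $q(f)$. Lemma~\ref{lemma:approxeq} already gives $q(f)\approx_{\epsilon}\sum_{c\in C}\max\{0,(1+\epsilon)\alpha(c)-d(c,f)\}$, and on top of that the terms $\hat{n}(f,d)$ feeding $q(f)$ are themselves only ADS estimates. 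I would quantify this second error by invoking the unbiasedness and concentration of the HIP estimator: with parameter $k$ its relative error is $\bigO(1/\sqrt{k})$ with high probability, so after a union bound over the $\bigO(n)$ facilities and the $\bigO(\log n)$ distance scales, each $\hat{n}(f,d)$ --- and hence $q(f)$ --- is estimated within a $(1\pm o(1))$ factor. Composing the two multiplicative perturbations, the opening decisions of our algorithm agree with those of an execution of the PRAM algorithm run with an effective accuracy parameter $(1+\epsilon)(1+o(1))$; feeding this into Theorem~\ref{thm:pram} yields the claimed ratio $(3+\epsilon)(1+o(1)) = 3+o(1)+\epsilon$.

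For the number of parallel iterations, let $D$ denote the diameter of $G$. The outer ball-expansion loop of Algorithm~\ref{algo:fastFacLoc} inherits its count from Theorem~\ref{thm:pram}: since $|F||C|\le n^2$, there are $\bigO(\frac{1}{\epsilon}\log(|F||C|))=\bigO(\frac{1}{\epsilon}\log n)$ radius-doubling steps. The Pregel cost of one such step is not constant, however: growing a ball of radius $\alpha$ and propagating the corresponding ADS updates is a BFS-style diffusion that requires up to $\bigO(D)$ supersteps to reach every vertex inside the ball. This multiplies the outer count to $\bigO(\frac{D}{\epsilon}\log n)$ supersteps for the opening phase. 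The MIS routine of Algorithm~\ref{algo:MIS} contributes $\lceil\log^2 n\rceil$ sequential rounds, and each round again performs distance-$(1+\epsilon)\alpha$ message propagation costing $\bigO(D)$ supersteps, i.e.\ $\bigO(D\log^2 n)$ in total. Adding the two phases and absorbing lower-order terms gives $\bigO(\frac{D}{\epsilon}\log^2 n)$, as claimed.

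The message bounds follow from the locality of the propagation together with the size bound on the ADS. In any single propagation superstep a vertex sends at most one bundled message across each incident edge, so one round of diffusion exchanges $\bigO(m)$ messages; the content of each message is a set of ADS entries, and since the ADS of a vertex has size $\bigO(k\log n)$, each message carries $\bigO(k\log n)$ bits. I would make the aggregate count rigorous using the defining property of the bottom-$k$ sketch exploited in Algorithm~\ref{algo:ads-giraph}: a vertex forwards only entries that are \emph{newly} inserted into its sketch, and each vertex inserts only $\bigO(k\log n)$ entries over the whole run, so the forwarding work is amortized rather than repeated afresh at every superstep.

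I expect the approximation analysis to be the main obstacle. The delicate point is that the constant $3$ in Theorem~\ref{thm:pram} arises from a dual-fitting / charging argument in which facilities are opened exactly when Eq.~(\ref{eq:opfac}) becomes tight; one must verify that replacing this exact test by an $\approx_{\epsilon}$-approximate and probabilistically-estimated test does not corrupt the charging beyond a $(1+\epsilon)(1+o(1))$ factor, and in particular that a facility opened slightly early or late still admits a valid dual charge. Controlling this compound error --- the deterministic $\epsilon$ from Lemma~\ref{lemma:approxeq} together with the random $o(1)$ from the ADS, held uniformly over all facilities and scales via a union bound --- is where the real work lies; the iteration and message counts are comparatively routine bookkeeping once the per-round BFS-diffusion cost and the ADS size bound are in hand.
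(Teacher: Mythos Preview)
Your proposal follows essentially the same approach as the paper: derive the approximation guarantee from Theorem~\ref{thm:pram}, Lemma~\ref{lemma:approxeq}, and the accuracy of the ADS/HIP estimator, and attribute the factor $D$ in the iteration bound to the hop-by-hop BFS-style propagation of messages. In fact, the paper does not give a detailed proof at all; it only states the theorem ``as a consequence of Theorem~\ref{thm:pram} and Lemma~\ref{lemma:approxeq}, as well as from the fact that the ADS provides an approximation to the values of $\hat{n}(f,d)$,'' remarks that the $o(1)$ term comes from ADS, and observes that the number of supersteps is proportional to $D$ because of hop-by-hop communication. Your write-up is therefore considerably more explicit than what the paper supplies---in particular, your breakdown of the iteration count into the $\bigO(\frac{1}{\epsilon}\log n)$ outer rounds times $\bigO(D)$ supersteps each, plus the $\bigO(D\log^2 n)$ MIS contribution, and your amortization argument for the $\bigO(m)$ message bound via the bottom-$k$ insertion property, have no counterpart in the text.
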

The $o(1)$ term, 
which denotes a function that goes to zero as $n$ becomes large, 
is due to the ADS scheme, while $k$ is the ADS bottom-$k$ parameter.
Observe that we are able to derive the same approximation guarantees of~\citet{Thorup2001kmedian}, 
but in a distributed setting. 

Theorem~\ref{thm:distributed} holds only for undirected graphs. 
For directed graphs our guarantee does not hold, even though our algorithm can be adapted in a straightforward manner. 
In our experiments we have used directed graphs as well, 
and the performance of the algorithm is equally good.

With respect to the running time, 
the overall number of supersteps required by the algorithm is proportional to the diameter $D$ of the graph. 
This follows from the hop-by-hop communication between clients and facilities. 
Thus, 
as typically real-world graphs have small diameter, we expect our algorithm to terminate in a small number of supersteps. 

\subsection{Implementation in Giraph}
\label{sec:implementation}

The facility-opening phase consists of two main subroutines, 
which are implemented by the vertices and masters compute functions: 
\emph{ball expansion} and \emph{client freezing}.
Initially, the algorithm expands the balls around the potential facilities in parallel, 
one superstep at a time.
When one of the balls encompasses a large enough number of clients, the facility at the center of the ball opens.
At this point, the \texttt{FreezeClients} subroutine gets called to freeze all the clients within the ball.
The algorithm then resumes expanding the balls in parallel until another facility opens.
This phase terminates when no unfrozen client remains, condition monitored by the master via a \texttt{sum} aggregator.
By the end of the algorithm, vertices are either open facilities, or frozen clients with at least one facility serving them.
Clients may have multiple facilities serving them as a result of concurrent openings or intersecting balls.

We now describe in detail the implementation of this phase of the algorithm in Giraph.
The communication and coordination between the two main subroutines is particularly interesting.
In Giraph, the communication between master and workers happens via aggregators.

\spara{How to ``call a subroutine''?}
While expanding the balls, we use a boolean aggregator called \texttt{SwitchState} to monitor if any facility was opened in the current superstep.
Every vertex can write a boolean value to this aggregator, and the master can read the boolean \texttt{and} of all the values in its next superstep.
The value of the aggregator is computed efficiently in parallel via a tree-like reduction.
If \texttt{SwitchState} is true, the master writes to another aggregator \texttt{State} that represents the current function being computed.
By setting the value of \texttt{State} to \texttt{FreezeClients}, the master can communicate to the vertices to switch their computation, effectively mimicking a subroutine call.

\spara{How does \texttt{FreezeClients} work?}
The vertices execute different subroutines by switching on the \texttt{State} aggregator.
When \texttt{FreezeClients} is executed, each facility opened in the last superstep sends a ``FreezeClient'' message to all the clients within the current radius of the ball.
This message contains the ID of the facility, and the distance it needs to reach, i.e., the radius.
Each client that receives this message gets activated modifies its state to frozen by the facility whose ID is in the message, and propagates the message to its own neighbors, as explained next.
When a vertex deactivates, it writes true on the \texttt{SwitchState} aggregator.
When all the vertices terminate and deactivate, the master's \texttt{SwitchState} aggregator (which is a boolean \texttt{and}) becomes true.
The master can then resume the \texttt{OpenFacilities} routine by writing on the \texttt{State} aggregator.

\spara{How to send a message to all vertices at distance~$\mathbf{d}$?}
In Giraph, messages are usually propagated along the graph, hop-by-hop.
A vertex $v$ that wants to send a message $\mathcal{M}$ to all veritces within distance $d$, sends to each neighbor $u$ a message containing $\mathcal{M}$, as well as, the remaining distance $d-d(v,u)$, if such a distance is larger than zero.
The message is then in turn propagated by $u$ to its neighbors if the remaining distance is larger than zero.
If a vertex receives multiple copies of the same message $\mathcal{M}$ it propagates only the one with maximum remaining distance.
This subroutine takes several supersteps to complete, proportional to the distance to reach, and sends a number of messages proportional to the number of edges within distance $d$.

\spara{How to estimate the number of unfrozen clients?}
We employ ADS to estimate $N(f_i,d)$, i.e., the number of unfrozen clients within distance $d$ from $f_i$.
To do so, we use the \emph{predicated} query feature of ADS.
Given that ADS is composed by a sample of the vertices in a graph (for each possible distance), we can obtain an unbiased sample of a subset of the vertices that satisfy a predicate simply by filtering the ADS with such predicate.
That is, we can apply the condition \emph{a posteriori}, after having built the ADS.

However, there is another issue to solve in our setting.
The predicate we want to compute (unfrozen) is dynamic, as clients are frozen continuously while the algorithm is running.
Therefore, we implement this predicate by maintaining explicitly the set of frozen clients.
Whenever a client is frozen, it writes its own ID to a custom aggregator which computes the set union of all the values written in it.
At the next superstep, each facility has access to this set, and can use it to filter the ADS for the following query.
Notice that, even though this set can grow quite large, it can be approximated by using a bloom filter at the cost of decreased accuracy in the estimate.
However, our experiments are not affected by this issue, so for simplicity we do not explore the use of bloom filters, and defer its study to a later work.

\section{Experiments}
\label{sec:experiments}

We perform extensive experiments to test our approach on several
datasets by using a shared Giraph cluster containing up to 500
machines.
We design our experiments so as to answer the following questions:
\begin{squishlist}
\item[\textbf{Q1:}] What is the performance of ADS and how does it affect the main algorithm? (Section~\ref{sec:eval-ads})
\item[\textbf{Q2:}] How does our algorithm compare with
  state-of-the-art sequential ones, in terms of quality? (Section~\ref{sec:eval-facility})
\item[\textbf{Q3:}] What is the scalability of our approach in terms of time and space? (Section~\ref{sec:eval-scalability})
\item[\textbf{Q4:}] How do the two implementations of MIS compare with each other? (Section~\ref{sec:eval-mis})
\end{squishlist}

\spara{Parameters.}
There are two parameters of interest in our approach: 
the parameter $k$ of the bottom-$k$ min-hash, 
which regulates the space-accuracy tradeoff of ADS, and 
the parameter $\epsilon$, which regulates the time-accuracy tradeoff in the facility-location algorithm.

\begin{table}[t]
  \caption{Datasets.}
  \centering
  \small
  \begin{tabular}{ l  r  r l}
    \toprule
    Name & $|V|$ & $|E|$ & Description \\
    \midrule
    FF10K & 10k & 712k & \multirow{4}{*}{Forest Fire random graphs} \\
    FF100K & 100k & 11M & \\
    FF1M & 1M & 232M & \\
    FF10M & 10M & 1.6B & \\
    \midrule
    RMAT10K & $2^{13}$ & 3M & \multirow{4}{*}{R-MAT random graphs} \\
    RMAT100K & $2^{17}$ & 5M & \\
    RMAT1M & $2^{20}$ & 30M & \\
    RMAT10M & $2^{23}$ & 500M & \\
    \midrule
    ORKUT & 3M & 117M & Orkut social network\footnote{\url{http://snap.stanford.edu/data/com-Orkut.html}} \\
    TUMBLR & 10M &166M & Tumblr reblog network\footnote{The graph contains chains of re-blogs of articles on Tumblr.com. An edge $(a,b)$ indicates that user $b$ re-blogged a post by user $a$.} \\
    UK2005 & 39M & 1.4B & Web graph of the .uk domain\footnote{\url{http://law.di.unimi.it/datasets.php}} \\
    FRND & 65M & 1.8B & Friendster social network\footnote{\url{http://snap.stanford.edu/data/com-Friendster.html}} \\
    \bottomrule
  \end{tabular}
  \label{tab:datasets}
\end{table}

\spara{Datasets.}
Table~\ref{tab:datasets} summarizes the datasets used in our experiments.
We use both synthetic and real-world datasets.
We use two types of synthetic datasets and create instances with exponentially increasing sizes to test the scalability of our approach.
We choose graph-generation models that resemble real-world graphs.
The first type of synthetic graphs is generated by using the Forest Fire (FF) model~\citep{leskovec2007graph} with the following parameter values:
the forward burning probability is set equal to $0.3$ and the backward equal
to $0.4$.
The second type of synthetic graphs uses the recursive matrix model
(RMAT)~\citep{chakrabarti2004r} with parameters $a = 0.45$, $b =
0.15$, $c = 0.15$, and $d = 0.25$.\footnote{For both models, FF and
  RMAT, we use the default parameters that the data generators come
  with.}
This model can only generate graphs with a number of vertices that is a power of $2$.
For weighted graphs, we assign weights between~$1$ and~$100$, uniformly at
random.

\begin{figure}[t!]
\centering
\includegraphics[width=0.4\textwidth]{ads_quality_forestfire_unweighted.eps}
\includegraphics[width=0.4\textwidth]{ads_quality_rmat_unweighted.eps}
\includegraphics[width=0.4\textwidth]{ads_quality_real_unweighted.eps}
\caption{\label{fig:ads_quality_unweighted}ADS relative error vs.\ $k$ (unweighted graphs).}
\end{figure}

\subsection{Evaluation of ADS}
\label{sec:eval-ads}

As described in Section~\ref{section:ADS}, we approximate neighborhood sizes by using the ADS data structure. 
To the best of our knowledge, we are the first to implement and test ADS on Giraph on large scale. 
In this section, we perform experiments to validate our choice and
assess the quality of the ADS estimates.
First, we evaluate the quality of ADS approximation by comparing
against exact neighborhood sizes.
Then, we experiment with the time taken for computing the ADS as a function
of~$k$. 

\spara{Accuracy vs.\ $\mathbf{k}$}:
To evaluate the accuracy of the estimates produced by ADS, we need to
compute exact neighborhood sizes. 
Since such computation is infeasible for large graphs, we compute
exact neighborhood sizes on a sample of vertices. 
For each neighborhood distance
(from 1 to 20 for unweighted graphs, 
and from 100 to 2000, at increments of 100, for weighted graphs), 
we sample 100 random vertices and compute their exact neighborhood sizes. 
For each sampled vertex and each distance, 
we compute the relative error as 
$|S_E - S_\mathrm{ADS}|/S_E$, 
where $S_E$ is the exact neighborhood size and 
$S_\mathrm{ADS}$ the ADS estimate. 
Relative error averages and variances across the 2000 samples are reported in
Figure~\ref{fig:ads_quality_unweighted} (unweighted) and Figure~\ref{fig:ads_quality_weighted} (weighted).

We can see that the estimates are of high quality. 
In most cases, 
even for small values of $k$,
the average relative error is less than~$50\%$.
The variance is also small.

Note that in
Figures~\ref{fig:ads_quality_unweighted}
and~\ref{fig:ads_quality_weighted} we do not report accuracy for the largest of our datasets.
The reason is that computation of exact neighborhood sizes becomes a bottleneck.

\begin{figure}[t]
\centering
\includegraphics[width=0.4\textwidth]{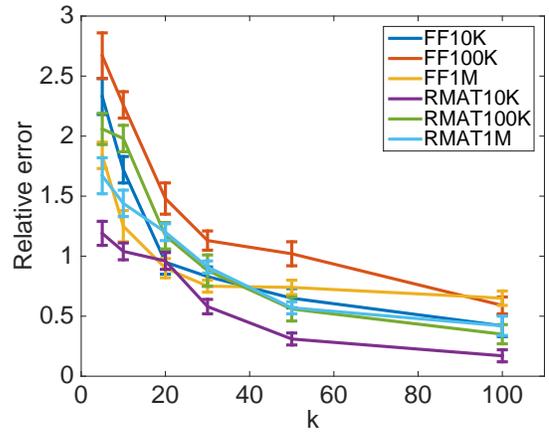}
\caption{\label{fig:ads_quality_weighted}ADS relative error vs.\ $k$ (weighted graphs).}
\end{figure}

\begin{figure}[t]
\centering
\includegraphics[width=0.4\textwidth]{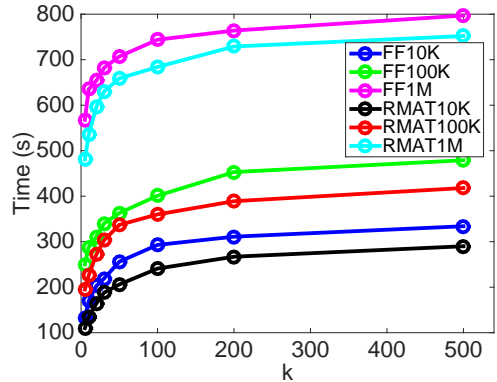}
\caption{ADS time taken vs.\ $k$ (unweighted graphs).}
\label{fig:ads_timetaken}
\end{figure}

\spara{Time vs.\  $\mathbf{k}$}:
Next, we measure the time taken to compute the ADS as a function of $k$, as shown in Figure~\ref{fig:ads_timetaken}. 
We clearly see that even for graphs with 1 million vertices and $k$ as
large as 500,  the algorithm finishes in less than 800 seconds.

\spara{Space requirements}: 
Since increasing the value of $k$ does not increase the time taken by
the algoritm, one would assume that we could use a very high value of
$k$ in order to improve the quality of approximation of ADS. 
The bottleneck, though, is the size of the ADS, which is proportional
to~$k$.
Thus, increasing $k$ increases the memory requirements of ADS.
In our setting, we have a limitation of 3.5 GB of memory per machine, 
which makes it infeasible to store an ADS for large graphs with very large
values of $k$ 
(say, $n > $ 10m and $k > $ 200).
Recall, however, that the size of the ADS is proportional to $n k\log n$, 
thus, the value of $k$ can increase linearly with the number of available machines.

\begin{table}
  \caption{\label{tab:sequential}Relative cost of the Giraph algorithm against the sequential one ($k=200$).}
  \centering
  \begin{tabular}{l r r r r r}
    \toprule
    Type & $|V|$ & $|E|$ & $\epsilon=0.01$ & $\epsilon=0.1$ & $\epsilon=1$ \\
    \midrule
    FF & 1k & 11k & 1.21 & 1.46 & 2.56 \\
    FF & 2k & 25k & 1.15 & 1.60 & 2.45 \\
    FF & 3k & 60k & 1.07 & 1.75 & 2.47 \\
    FF & 4k & 67k & 1.08 & 1.48 & 2.16 \\
    FF & 5k & 121k & 1.05 & 1.5 & 2.13 \\
    FF & 6k & 206k & 1.01 & 1.41 & 2.02 \\
    FF & 7k & 268k & 1.03 & 1.33 & 1.67 \\
    FF & 8k & 380k & 1.03 & 1.25 & 1.55 \\
    FF & 9k & 520k & 1.01 & 1.18 & 1.41 \\
    FF & 10k & 712k & 1.02 & 1.09 & 1.43 \\
    RMAT & $2^{10}$ &100k & 1.08 & 1.55 & 1.88 \\
    RMAT & $2^{11}$ & 200k & 1.06 & 1.36 & 1.7 \\
    RMAT & $2^{12}$ & 500k & 1.05 & 1.35 & 1.73 \\
    RMAT & $2^{13}$ & 800k & 1.05 & 1.24 & 1.44 \\
    RMAT & $2^{14}$ & 1000k & 1.02 & 1.14 & 1.39 \\
    \bottomrule
  \end{tabular}
\end{table}

\subsection{Facility-location algorithm}
\label{sec:eval-facility}

In this section, we evaluate the quality of the Giraph implementation of the algorithm presented in Section~\ref{sec:algorithm}.
We first compare our algorithm against a simple sequential baseline,
in terms of the cost function, for different values of the accuracy
parameter~$\epsilon$. 
We then evaluate the performance and the running time of the algorithm
as a function of~$\epsilon$.

\spara{Comparison with sequential algorithm}:
We use the sequential approximation algorithm by~\citet{Charikar1999improved} as a baseline.
The algorithm is a simple local-search method that achieves an
approximation ratio of ($2.414+\epsilon$) and has running time of
$\tilde{\bigO}(n^2/\epsilon$). 

Note that the sequential algorithm assumes the availability of all-pairs shortest path distances, which is computationally very
expensive, even for small graphs.
Therefore, we perform our evaluation with graphs consisting of no more than $10\,000$ vertices.

Table~\ref{tab:sequential} shows the results of the comparison in
terms of {\em relative cost}, which is defined as the cost of the
sequential algorithm divided by the cost of our algorithm, 
for different values of~$\epsilon$.
A smaller value means that our algorithm is competitive with the baseline.
We can see that, for small graphs, our algorithm performs quite well, 
even for large values of~$\epsilon$.

\spara{Cost vs.\ accuracy ({$\epsilon$})}:
Table~\ref{tab:sequential} shows the relative cost of our algorithm
(compared again to the sequential algorithm) with respect to the
accuracy. 
As expected, we get better solutions for smaller values of $\epsilon$,
but the solution does not get much worse even for large values of~$\epsilon$.

\spara{Running time vs.\ accuracy ({$\epsilon$})}:
We also measure the time taken by our algorithm as a function
of~$\epsilon$. 
Figure~\ref{fig:timetaken_eps} shows these results. 
From the figure we see that, as expected, our algorithm scales
linearly with respect to the size of the graph,\footnote{A linear fit
  gives $R^2$ values between $0.8$ and $0.9$.}
and is faster for larger values of~$\epsilon$.

\begin{figure}
\centering
\includegraphics[width=0.4\textwidth]{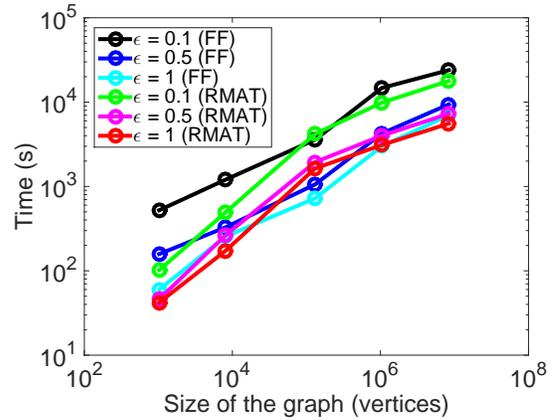}
\caption{Time taken by the algorithm for different values of~$\epsilon$ on several graphs.}
\label{fig:timetaken_eps}
\end{figure}

\subsection{Scalability}
\label{sec:eval-scalability}

Next, we examine the scalability of the different phases of our algorithm. 
Recall that the three phases of our algorithm are 
($i$) ADS computation (pre-processing),
($ii$) facility-location algorithm, and 
($iii$) MIS computation (post-processing). 
Figure~\ref{fig:timetaken_all_phases} presents the time taken, 
for different datasets, broken down by phase.
The total running time, for various graph sizes, is also shown in Figure~\ref{fig:timetaken_total}.


For the results shown in Figure~\ref{fig:timetaken_all_phases}, 
we have used 
$k = 20$, $\epsilon = 0.1$, 
and $200$ machines. 
Since the running time on a distributed cluster depends on various
factors, such as the current load of the machines, we repeat all experiments
three times and report the median running time.

\begin{figure}
\centering
\includegraphics[width=0.4\textwidth]{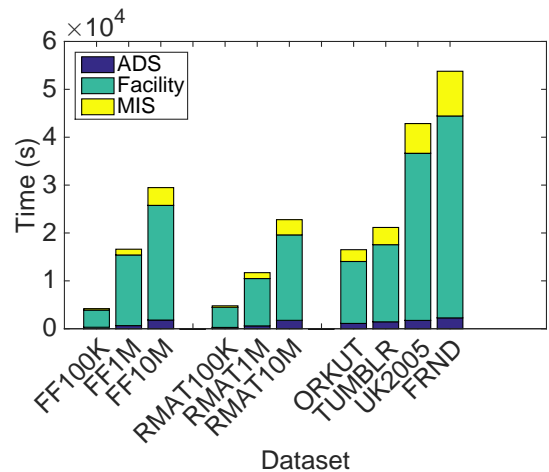}
\caption{Time taken by each phase of the algorithm.}
\label{fig:timetaken_all_phases}
\end{figure}

\begin{figure}[t]
\centering
\includegraphics[width=0.4\textwidth, clip=true, trim=0 0 0 0]{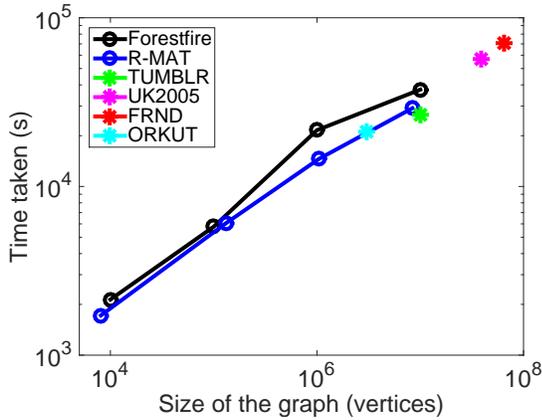}
\caption{Total time taken by the algorithm.}
\label{fig:timetaken_total}
\end{figure}

\subsection{Luby's vs.\ parallel MIS}
\label{sec:eval-mis}

As discussed above, 
we implement two methods for finding the maximal independent
set (MIS): Luby's classic one~\cite{luby1986simple}, 
which was also implemented recently by~\citet{salihoglu2014optimizing},
and a recent algorithm by~\citet{blelloch2012greedy}.

%

We compare the two methods in terms of
total time taken and number of supersteps needed to converge. 
Table~\ref{tab:lubys_mis} shows the results. 
We see that our parallel MIS algorithm is at least $3$ to $5$ times
faster than Luby's algorithm.

\begin{table}
  \caption{Comparison of two implementations of MIS in Giraph, in terms of supersteps and time taken (median over three runs).}
\centering
  \begin{tabular}{ l r r r r }
    \toprule
    Graph & \multicolumn{2}{c}{Supersteps} & \multicolumn{2}{c}{Time (s)} \\
    \cmidrule(lr){2-3} \cmidrule(lr){4-5}
    & Luby's & MIS & Luby's & MIS \\
    \midrule
    FF10K & \num{750} & \num{29} & \num{730} & \num{104} \\
    FF100K & \num{3473} & \num{85} & \num{1869} & \num{296} \\
    FF1M & \num{6119} & \num{325} & \num{6155} & \num{1205} \\
    FF10M & \num{20154} & \num{1613} & \num{11744} & \num{3711} \\
    \midrule
    RMAT10K & \num{645} & \num{17} & \num{616} & \num{87} \\
    RMAT100K & \num{3200} & \num{73} & \num{1576} & \num{319} \\
    RMAT1M & \num{5832} & \num{285} & \num{4109} & \num{1232} \\
    RMAT10M & \num{17557} & \num{1533} & \num{9492} & \num{3181} \\
    \bottomrule
  \end{tabular}
  \label{tab:lubys_mis}
\end{table}

\smallskip

\section{Related work}
\label{section:related}


\spara{Facility location.}
Facility location is a classic optimization problem.
The traditional formulation (metric uncapacitated facility location) is \NP-hard, and so are many of its variants.
Existing algorithms rely on techniques such as LP rounding, local search, primal dual, and greedy.
The greedy heuristic obtains a solution with an approximation guarantee of $(1+\log|\customerset|)$~\citep{Hochbaum1982median}, 
while constant-factor approximation algorithms have also been introduced~\citep{Shmoys1997approx,Charikar1999improved}.
The approximation algorithm with the best factor so far ($1.488$) is very close to the approximability lower bound ($1.463$)~\citep{Li2011facloc1488}.



Differently from most previous work, the input to our algorithm is a sparse graph representing potential facilities and clients and their distances, rather than the full bipartite graph of distances between facilities and clients.
Note that building the full bipartite graph requires computing all-pairs of distances and implies an $\mathcal{O}(n^2)$ algorithm.
\citet{Thorup2001kmedian} considers a setting similar to ours, and provides a fast sequential algorithm $\tilde{\mathcal{O}}(n+m)$.

\citet{Blelloch2010parallel} propose a parallel approximation algorithm for facility location in the PRAM model.
In this work, we extend the former algorithm to work in a more realistic shared-nothing Pregel-like model.
Other parallel algorithms have also been proposed~\citep{Gehweiler2006dist,Moscibroda2005facility,Pandit2009return}.

\spara{Applications.}
Facility location is a flexible model 
that
has been applied successfully in many domains, such as city planning,
telecommunications, 
electronics, 
and others. 
For an overview of applications, please refer to the textbook of \citet{Hamacher2002facility}.
Furthermore, applications of facility location in social-network analysis are provided in our motivation scenarios in Section~\ref{sec:scenario}.

%
%
%

\spara{Large-scale graph processing.}
MapReduce~\cite{Dean2004mapreduce} is one of the most popular paradigms used for mining massive datasets.
Many algorithms have been proposed for various graph problems, such as
counting triangles~\citep{Suri2011triangle},
matching~\citep{deFrancisciMorales2011scm,Lattanzi2011filtering}, and 
finding densest subgraphs~\citep{Bahmani2012densest}.

However, given the iterative nature of most graph algorithms, 
MapReduce is often not the most efficient solution. 
Pregel~\citep{Malewicz2010pregel} is large-scale graph processing platform that supports a vertex-centric programming paradigm and uses the bulk synchronous parallel (BSP) model of computation.
Giraph~\citep{Ching2011giraph} is an open-source clone of Pregel.
It is the platform that we use in this work.
Other distributed systems for graph processing have recently been
proposed, for instance, Signal/Collect~\citep{Stutz2010signalcollect}, GraphLab/PowerGraph~\citep{Low2012distributedgraphlab,Gonzalez2012powergraph}, GPS~\citep{Salihoglu2013gps}, and GraphX~\citep{Gonzalez2014graphx}.
Most of the APIs of these system follow the gather-apply-scatter (GAS)
paradigm,  which can be readily used to  express our algorithm.
However, the BSP model is still used due to its simplicity and ease of use.

\spara{Algorithms.}
Our work takes advantage of a number of successful algorithmic
techniques. 
We use the all-distance-sketches (ADS) and the historic inverse probability
(HIP) estimator by \citet{Cohen2014ads} to estimate the number of
vertices within certain distance from a given vertex. 
HIP is a cardinality estimator similar to HyperLogLog counters~\citep{Flajolet2007hyperloglog} and Flajolet-Martin counters~\citep{Flajolet1985martin}.
HyperANF~\citep{Boldi2011hyperanf} is a related algorithm that approximates the global neighborhood function of the graph by using HyperLogLog counters, but it is not directly usable in our case as we need separate neighborhood functions for each vertex.

\section{Conclusions}
We have shown how to tackle the facility-location problem at scale by using Pregel-like systems.
In particular, we addressed the graph setting of the problem, which allows to represent the input in sparse format as a graph.
We leveraged graph sparsity to tackle problem instances whose size is much larger than previously possible.

Our algorithm is composed by three phases: $(i)$ \emph{neighborhood sketching}, $(ii)$ \emph{facility opening}, and $(iii)$ \emph{facility selection}.
We implemented all three phases in Giraph, and published the code as open-source software.
For the first phase, we showed how to use ADS with HIP, a recent graph-sketching technique.
We adapted an existing PRAM algorithm with approximation guarantees for the second phase.
Finally, for the third phase we proposed a new Giraph algorithm for the maximal independent set (MIS), which is much faster than the previous state-of-the-art.
Our approach was able to scale to graphs with millions of vertices and billions of edges, thus adding facility location to the toolset of algorithms available for large-scale problems.

This work opens up several new research questions.
From the point of view of the practitioner, this algorithm enables to solve large-scale facility-location problems, thus is a candidate for real-world applications in Web and social-network analysis.
A more general question is whether better algorithms exist for the setting we consider.
Also, it would be interesting to know whether there are any primitives 
that the system could offer to develop better algorithms.

\bibliographystyle{nourlabbrvnat}
\bibliography{biblio-abbrv}

\end{document}